\newtheorem{theorem}{Theorem}
\newtheorem{lemma}{Lemma}
\theoremstyle{definition}
\newcommand*{\cH}{\mathcal{H}}
\newcommand*{\cL}{\mathcal{L}}
\newcommand*{\cM}{\mathcal{M}}
\newcommand*{\cR}{\mathcal{R}}
\newcommand{\ket}[1]{|#1\rangle}
\newcommand{\bra}[1]{\langle #1 |}
\newcommand{\braket}[2]{\langle #1 |#2 \rangle}
\newcommand{\proj}[1]{\ket{#1}\bra{#1}}
\newcommand{\beq}{\begin{equation}}
\newcommand{\eeq}{\end{equation}}
\newcommand{\best}{\begin{equation*}}
\newcommand{\eest}{\end{equation*}}
\newcommand{\idmap}{{\rm id}}
\DeclareMathOperator{\Tr}{Tr}
\begin{document}

\title{Hierarchy of efficiently computable and faithful lower bounds to quantum discord}

\author{Marco Piani}
\affiliation{SUPA and Department of Physics, University of Strathclyde, Glasgow G4 0NG, UK}
\affiliation{Department of Physics \& Astronomy and Institute for Quantum Computing, \\  University of Waterloo, Waterloo, Ontario, N2L 3G1, Canada}

\begin{abstract}
Quantum discord expresses a fundamental non-classicality of correlations more general than quantum entanglement. We combine the no-local-broadcasting theorem, semidefinite-programming characterizations of quantum fidelity and quantum separability, and a recent breakthrough result of Fawzi and Renner about quantum Markov chains to provide a hierarchy of computationally efficient lower bounds to quantum discord. Such a hierarchy converges to the surprisal of measurement recoverability introduced by Seshadreesan and Wilde, and provides a faithful lower bound to quantum discord already at the lowest non-trivial level. Furthermore, the latter constitutes by itself a valid discord-like measure of the quantumness of correlations.
\end{abstract}

\maketitle

\emph{Introduction.}---Correlations in quantum mechanics exhibit non-classical features that include non-locality~\cite{revnonloc}, steering~\cite{wisemanPRL2007}, entanglement~\cite{revent}, and quantum discord~\cite{modi2012classical}. Quantum correlations play a fundamental role in quantum information processing and quantum technologies~\cite{nielsen2010quantum}, which go from quantum cryptography~\cite{QKD} to quantum metrology~\cite{giovannetti2011advances}. While both non-locality and steering are manifestations of entanglement, 
quantum discord is a more general form of quantumness of correlations that includes entanglement but goes beyond it.
In particular, \emph{almost all} distributed states exhibit discord~\cite{ferraro2010almost}.
This fact calls for fully elevating the study of quantum discord
to the quantitative level, since just certifying that discord is present may be considered of limited interest. While several approaches to the quantification of discord have been already proposed (see, e.g.~\cite{modi2012classical,luo2008using,modi2010unified,dakic2010necessary,luo2010geometric,streltsov2011linking,piani2011all,girolami2012observable,piani2012quantumness,paula2013geometric,chang2013remedying,girolami2014quantum,piani2014quantumness,seshadreesan2014r,seshadreesan2014fidelity} and references therein),
in this paper we significantly push forward a meaningful, reliable, and practical quantitative approach to the study of quantum discord that is based on fundamental quantum features of quantum correlations, and at the same time is computationally friendly.

Quantum discord was introduced in terms of the minimum amount of correlations, as quantified by mutual information, that is necessarily lost in a local quantum measurement of a bipartite quantum state~\cite{ollivier2001quantum,henderson2001classical} (see below for exact definitions). It is then clear that it is relatively easy to find upper bounds to quantum discord: the loss of correlations due to \emph{any} measurement
provides some upper bound. Nonetheless, standard quantum discord is not easily computed even in simple cases, and general easily computable \emph{lower} bounds to it are similarly not known. In this paper we provide a family of lower bounds for the standard quantum discord which can reliably be computed numerically. On the other hand, they have each physical meaning, since they are based on `impossibility features' (i.e., no-go theorems) 
related to the local manipulation of quantum correlations. Furthermore, such lower bounds satisfy the basic requests that should be imposed on any meaningful measure of quantum correlations~\cite{brodutch2012criteria,piani2012problem}, hence making each quantifier in the hierarchy a valid discord-like quantifier in itself.

One `impossibility feature' associated to quantum discord relates to local broadcasting~\cite{piani2008no,luo2010decomposition}, which can be seen as a generalization of broadcasting~\cite{barnum1996noncommuting}, itself a generalization of cloning~\cite{wootters1982single}: correlations that exhibit quantum discord cannot be freely locally redistributed or shared, and indeed, discord can be exactly interpreted as the asymptotic loss in correlations necessarily associated with such an attempt~\cite{streltsov2013quantum,brandao2013quantum}. A very related `impossibility feature' of discord deals with the `local relocation' of quantum states by classical means, that is, roughly speaking, with the transmission (equivalently, storing) of the quantum information contained in quantum subsystems via classical communication (a classical memory). Indeed, it can be checked through a powerful result by Petz~\cite{petz1986sufficient,petz1988sufficiency,hayden2004structure} that the ability to perfectly locally broadcast (equivalently, to perfectly store by classical means) distributed quantum \emph{states} reduces to the ability to perfectly locally broadcast or classically store \emph{correlations}, as measured by the quantum mutual information~\cite{hayashi2006quantum,piani2008no,luo2010decomposition}, a feat possible---by definition---only in absence of discord. The relation between the above two `impossibility features' is due to the fact that quantum information becomes classical when broadcast to many parties~\cite{bae2006asymptotic,chiribella2006quantum,chiribella2011quantum,brandao2013quantum}. 

The consideration of the general, non-perfect (for states exhibiting discord) case of the classical transmission/storing of an arbitrary quantum state has recently received renewed attention also thanks to a breakthrough result of Fawzi and Renner~\cite{fawzi2014quantum} (see also~\cite{brandao2014quantum}) that generalizes the result by Petz. In~\cite{seshadreesan2014fidelity},  Seshadreesan and Wilde explicitly suggested to approach the study of the general quantumness of correlations, and in particular their quantification, in terms of how well distributed quantum states can be locally transmitted or stored by classical means. 
They introduced a discord-like quantifier, the surprisal of measurement recoverability, which, thanks to the results of~\cite{fawzi2014quantum}, directly translates into a lower bound to the standard quantum discord. Unfortunately, the surprisal of measurement recoverability is in general not easily computable either. In this paper, by considering how well a quantum state can be locally broadcast, 
 we generalize the surprisal of measurement recoverability in such a way to obtain numerically computable (upper and) lower bounds to it, which provably converge to it. Thus, we also obtain computable lower bounds to the standard quantum discord.  
The hierarchy of lower bounds that we introduce exploits ideas used in the characterization and detection of entanglement via semidefinite programming~\cite{DohertyPRL,DohertyPRA,DohertyBell}. Semidefinite programming optimization techniques~\cite{boyd2009convex} have  found many other significant applications in quantum information (see, e.g.,~\cite{nowakowski,jain2011qip,kempe2010unique,navascuesPRL,watrous2009semidefinite,johnston2010family,eisertSDP}), and, in recent times, they have been used also in the quantification of steering~\cite{quantifyingsteering,piani2014einstein}. Here we extend the use of semidefinite programming for the study of quantum correlations to quantum discord.

\emph{Preliminaries}---We will consider finite-dimensional systems, so that a quantum state corresponds to a $d$-dimensional positive semidefinite density matrix $\rho$ which lives in the space $L(\cH)$ of linear operators on a Hilbert space $\cH\simeq\mathbb{C}^d$. The von Neumann entropy associated with $\rho$ is given by $S(\rho)=-\Tr(\rho\log\rho)$. We will indicate by $\Tr_{\backslash X}$ a trace performed over every other system except $X$. In the case we consider a bi- or multi-partite system, with global state $\rho$, we denote $S(X)_\rho=S(\rho_X)$, where $L(\cH_X)\ni\rho_X=\Tr_{\backslash X}(\rho)$ is the reduced state of system $X$.
The fidelity $F(\sigma,\rho)=\Tr\sqrt{\sqrt{\rho}\sigma\sqrt{\rho}}$  is a measure of how similar two states $\rho$ and $\sigma$ are~\cite{nielsen2010quantum}. It holds $0\leq F(\sigma,\rho)\leq 1$, with $F(\sigma,\rho)=1$ if and only if $\rho=\sigma$. We will need the fact that the fidelity can be seen as the solution to the semidefinite programming (SDP) optimization problem~\cite{killoran2012entanglement,watrous2012simpler}
\begin{subequations}
  \label{eq:SDPfidelity}
  \begin{align}
    {\text{maximize}}\quad
    & \frac{1}{2}(\Tr(X)+\Tr(X^\dagger)) \\
    \text{subject to}\quad
    &  \begin{pmatrix}
    	\rho 		& X \\
     	X^\dagger 	& \sigma
     \end{pmatrix} \geq 0.
   \end{align}
\end{subequations}
Another measure of similarity of states is the trace distance $\Delta(\sigma,\rho)=\frac{1}{2}\|\sigma -\rho\|_1$ where $\|\xi\|_1=\Tr(\sqrt{\xi^\dagger \xi})$ is the trace norm~\cite{nielsen2010quantum}. 
It holds $0\leq\Delta(\sigma,\rho)\leq 1$, and $1-F(\sigma,\rho) \leq \Delta(\sigma,\rho) \leq \sqrt{1-F^2(\sigma,\rho)}$~\cite{fuchs1999cryptographic}. Transformations of physical systems are described by completely positive and trace-preserving linear maps, also called channels, from $L(\cH_\textrm{in})$ to $L(\cH_\textrm{out})$, where $\cH_\textrm{in}$ and $\cH_\textrm{out}$ are the input and output spaces, respectively~\cite{nielsen2010quantum}.

\emph{Separability and symmetric extensions.}---A bipartite state $\rho_{AB}$ is separable (or unentangled) if it admits the decomposition 
$\rho_{AB}^\textrm{sep}=\sum_b p_b \proj{\alpha_b}_A\otimes \proj{\beta_b}_B$, 
for $\{p_b\}$ a probability distribution, and $\ket{\alpha_b}_A$ and $\ket{\beta_b}_B$ (not necessarily orthogonal) vector states for $A$ and $B$, respectively. A bipartite state that is not separable is entangled~\cite{werner1989}.

Consider systems $B_1\simeq  B_2 \simeq B$. A state $\rho_{AB_1B_2}$ such that $\rho_{AB_1}=\rho_{AB_2}=\rho_{AB}$, and such that $\rho_{AB_1B_2}=V_{B_1B_2}\rho_{AB_1B_2}V_{B_1B_2}^\dagger$, with $V_{B_1B_2}$ the swap operator, $V_{B_1B_2}\ket{\beta}_{B_1}\ket{\beta'}_{B_2}=\ket{\beta'}_{B_1}\ket{\beta}_{B_2}$, is called a (two-)\emph{symmetric extension} (on $B$) of $\rho_{AB}$. If the stronger condition $\rho_{AB_1B_2}=\Pi^+_{B_1B_2} \rho_{AB_1B_2}\Pi^+_{B_1B_2} $ holds, with $\Pi^+_{B_1B_2}$ the projector onto the symmetric subspace of $B_1B_2$, we call $\rho_{AB_1B_2}$ a (two-)\emph{Bose}-symmetric extension (on $B$). The concept can be generalized to $k$ extensions. Let $B^k=B_1B_2\ldots B_k$. We say that $\rho_{AB^k}$ is a $k$-symmetric extension of $\rho_{AB}$ (on $B$) if: (i) $\rho_{AB_i}=\Tr_{\backslash{AB_i}}(\rho_{AB^k})=\rho_{AB}$, for all $i=1,\dots,k$; (ii) $\rho_{AB^k}=V \rho_{AB^k} V^\dagger$ for any unitary $V$ that permutes the $B^k$ systems. Note that, because of the symmetry (ii), in (i) it is enough to consider the trace over all systems $B_i$ except an arbitrary one, e.g., $B_1$.
 If the stronger condition (ii') $\rho_{AB^k}=\Pi^+_{B^k} \rho_{AB^k} \Pi^+_{B^k}$, with $\Pi^+_{B^k}$ the projector onto the fully symmetric subspace $B^k_+$ of $B^k$, holds, we say that $\rho_{AB^k}$ is a $k$-Bose-symmetric extension of $\rho_{AB}$ (on $B$). Only separable states admit $k$-symmetric extensions for all $k$~\cite{fannes1988symmetric,raggio1989quantum,DohertyBell}.


\emph{No local broadcasting}.---The no-local-broadcasting 
theorem~\cite{piani2008no,luo2010decomposition} states that there exists a broadcasting channel $\Lambda_{B\rightarrow B_1B_2}$ such that
$\Tr_{B_1}(\Lambda_{B\rightarrow B_1B_2}[\rho^{AB}])=\Tr_{B_2}(\Lambda_{B\rightarrow B_1B_2}[\rho^{AB}])=\rho^{AB}$
if and only if $\rho^{AB}$ is quantum-classical, i.e., of the form
\beq
\label{eq:QCstate}
\rho_{AB}=\sum_b p_b \rho_b^A\otimes \proj{b}_B,
\eeq
with orthogonal $\ket{b}$s, and $\{p_b\}$ a probability distribution. Notice that here we focus on \emph{one-sided} local broadcasting~\cite{luo2010decomposition}, rather than two-sided local broadcasting~\cite{piani2008no}. 
If local broadcasting is possible, then: (i) it can be realized with a \emph{symmetric} broadcasting channel, whose output is symmetric among the outputs; (ii) an arbitrary number $k$ of extensions can be obtained, simply by $\ket{b}\mapsto\ket{b}^{\otimes k}$, for $\ket{b}$ as in~\eqref{eq:QCstate},  i.e., with output into the fully symmetric subspace $B^k_+$, so that the broadcasting channel has actually Bose-symmetric output (see Fig.~\ref{fig:broadcasting}).

Consider then Bose-symmetric broadcast maps $\Lambda_{B\rightarrow B^k_+}$ with output in the fully symmetric subspace $B^k_+$, and the induced maps $\Lambda^{\textrm{Sym}_+(k)}_{B} = \Tr_{\backslash B_1}\circ \Lambda_{B\rightarrow B^k_+}$, where $\circ$ denotes composition. We say that any map $\Lambda^{\textrm{Sym}_+(k)}_{B}$ that admits such a representation is $k$-Bose-symmetric extendible. The no-local-broadcasting theorem can then be recast as the fact that, for any $\rho_{AB}$ that is not quantum-classical, $F(\rho_{AB},\Lambda^{\textrm{Sym}_+(k)}_B[\rho_{AB}])< 1$ for any $k\geq 2$ and any $k$-Bose-symmetric extendible $\Lambda^{\textrm{Sym}_+(k)}_B$.

\begin{figure}
\begin{tikzpicture}[
       bluebox/.style={draw=black!50, minimum width=7mm, minimum height=30mm,
        fill=blue!10}]
        \node (rhoAB) at (0,0) {$\rho_{AB}$};
        \node (Lambda) at (2,1.5) [bluebox] {$\Lambda_{B\rightarrow B^k_+}$};
        \node (B1) at ([xshift=0.8cm,yshift=-1cm]Lambda.east) {$B_1$};
	  \node (B2) at ([xshift=0.8cm,yshift=-0.4cm]Lambda.east) {$B_2$};
	  \node (Bk) at ([xshift=0.8cm,yshift=1cm]Lambda.east) {$B_k$};
	  \node (A) at  ([xshift=+8mm,yshift=-2cm]Lambda.east) {$A$};
	  \node (B) at  ([xshift=-8mm]Lambda.west) {$B$};

        \draw[->] (rhoAB.east) -- ([xshift=-5mm]Lambda.west) -- (Lambda.west);
        \draw[->] (rhoAB.east) -- ([xshift=-5mm,yshift=-2cm]Lambda.west) -- (A.west);
        \draw[->] ([yshift=-1cm]Lambda.east) -- (B1.west) ;
   	 \draw[->] ([yshift=-0.4cm]Lambda.east) -- (B2.west) ;
	  \draw[->] ([yshift=+1cm]Lambda.east) -- (Bk.west) ;
	  
	   \node at ([xshift=0.25cm,yshift=0.4cm]Lambda.east) {$\vdots$};
	   
	  
	  
	  \draw [decorate,decoration={brace,amplitude=2pt,mirror}] ([xshift=1.2cm,yshift=-2.2cm]Lambda.east) --  ([xshift=1.2cm,yshift=-0.8cm]Lambda.east)  node [black,midway,xshift=1.2cm] {$\rho_{AB_1} \stackrel{\textrm{?}}{\approx} \rho_{AB}$};	  
\end{tikzpicture}
\caption{Symmetric local broadcasting (colour online). A local Bose-$k$-symmetric broadcasting channel $\Lambda_{B\rightarrow B^k_+}$ maps $B$ to the fully symmetric subspace of $B^k=B_1 B_2\ldots B_k$. The degree to which $\rho_{AB_1}=\Tr_{\backslash AB_1}(\Lambda_{B\rightarrow B^k_+}[\rho_{AB}])$ can approximate the initial state $\rho_{AB}$ depends on the classicality of correlations between $A$ and $B$.}
\label{fig:broadcasting}
\end{figure}
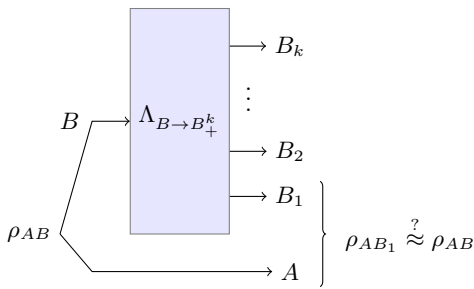

We now recall that every $k$-Bose-symmetric extendible channel is close to an entanglement-breaking (EB)---also called measure-and-prepare---map~\cite{horodecki2003entanglement}
\beq
\label{eq:EB}
\Lambda^{\text{EB}}_{B}[\cdot] = \sum_y \Tr(M^B_y \cdot) \proj{\beta_y}_B,
\eeq
where $\{M^B_y\}$ is a positive-operator-valued measure (POVM) and $\ket{\beta_y}_B$s are normalized vector states, not necessarily orthogonal. Entanglement-breaking maps have the defining property that, for any given $\rho_{AB}$, $(\idmap_A\otimes \Lambda^{\text{EB}}_{B}) [\rho_{AB}]$ is separable. 
%
%
More precisely, one can prove that for any $k$-Bose-symmetric extendible $\Lambda^{\textrm{Sym}_+(k)}_{B}$ there is an entanglement breaking map $\Lambda^{\text{EB}}_{B}$ close to it in the so-called diamond-norm distance; more precisely~\cite{chiribella2011quantum}:
\beq
\label{eq:chiribella}
\sup_{\rho_{AB}} \Delta\left(\Lambda^{\textrm{Sym}_+(k)}_{B}[\rho_{AB}],\Lambda^{\text{EB}}_{B}[\rho_{AB}]\right)\\
\leq \frac{|B|}{k},
\eeq
where $|B|$ indicates the dimension of system $B$, i.e., of $\cH_B$. Furthermore, it is clear that any entanglement-breaking map is $k$-Bose-symmetric extendible, since for any entanglement-breaking map \eqref{eq:EB} we can consider $\Lambda_{B\rightarrow B^k_+}[\cdot] = \sum_y \Tr(M^B_y \cdot) (\proj{\beta_y}^{\otimes k})_{B^k_+}$. Denote by $\cL^{\textrm{Sym}_+(k)}=\{\Lambda^{\textrm{Sym}_+(k)}\}$ the class of channels with a $k$-Bose-symmetric extension, and by $\cL^{\textrm{EB}}=\{\Lambda^{\textrm{EB}}\}$ the class of entanglement-breaking channels~\footnote{We are here considering maps with the same fixed input/output space, but varying $k$.}. We can then write $\cL^{\textrm{EB}}\subseteq \cL^{\textrm{Sym}_+(k)}$ and $\cL^{\textrm{Sym}_+(k)}\rightarrow \cL^{\textrm{EB}}$ for $k\rightarrow\infty$.

\emph{Mutual information, recoverability, and discord}---The mutual information between  $A$ and $B$ is defined as $I(A:B)_\rho=S(A)_\rho+S(B)_\rho-S(AB)_\rho$, and is a fundamental measure of the total correlations present between $A$ and $B$~\cite{nielsen2010quantum,groismanmutual,wilde2013quantum}. The conditional mutual information can be defined as~\cite{nielsen2010quantum} $I(A:B|C)_\rho=I(A:BC)_\rho-I(A:C)_\rho$, 
i.e., it is equivalent to the decrease of correlations between $A$ and $BC$ due to the loss of system $B$. 
The celebrated strong subadditivity of the von Neumann entropy~\cite{lieb1973proof} is equivalent to 
\beq
\label{eq:strongsubadditivity}
I(A:B|C)_\rho\geq 0.
\eeq
When~\eqref{eq:strongsubadditivity} is satisfied with equality, $\rho_{ABC}$ is said to form a Markov chain: indeed, a strong result by Petz~\cite{petz1986sufficient,petz1988sufficiency,hayden2004structure} ensures that there exist a recovery channel $\cR_{C\rightarrow BC}$ such that $\rho_{ABC}=\cR_{C\rightarrow BC}[\rho_{AC}]$. Fawzi and Renner recently generalized this by proving that, for any tripartite state $\rho_{ABC}$, there always exists a recovery channel $\cR_{C\rightarrow BC}$ such that~\cite{fawzi2014quantum} (see also~\cite{brandao2014quantum})
\beq
\label{eq:FawziRenner}
F(\cR_{C\rightarrow BC}[\rho_{AC}],\rho_{ABC})\geq 2^{-\frac{1}{2}I(A:B|C)_\rho},
\eeq
that is, roughly speaking, the smaller the decrease of correlations 
between $A$ and $BC$ due to the loss of $B$, the better the original $ABC$ state can be recovered from operating just on $C$ alone.

Consider measurement maps $\cM_{B\rightarrow Y}[\cdot] = \sum_y \Tr(M^B_y \cdot) \proj{y}_Y$, where $\{M_y\}$ is a POVM, and the $\ket{y}$s are orthogonal vector states. The discord of $\rho$ between $A$ and $B$ with measurement on $B$ can be defined as~\cite{piani2012problem,seshadreesan2014fidelity}
\beq
\label{eq:discord}
\begin{aligned}
D(A:\underline{B})_{\rho}
&=\min_{\cM_{B\rightarrow Y}}\left(I(A:B)_{\rho_{AB}}-I(A:Y)_{\cM_{B\rightarrow Y}[\rho_{AB}]}\right)\\
&=\min_{V_{B\rightarrow YE}}\left(I(A:YE)_{\rho_{AYE}}-I(A:Y)_{\rho_{AY}}\right)\\
&=\min_{V_{B\rightarrow YE}} I(A:Y|E)_{\rho_{AYE}},
\end{aligned}
\eeq
where in the second and third lines the minimization is over all isometries $V_{B\rightarrow YE}$ that realize measurement maps $\cM_{B\rightarrow Y}$, with $E$ considered as the environment of the dilation~\cite{nielsen2010quantum,seshadreesan2014fidelity}. That is, $E$ is the system that is traced out, or lost, in $\cM_{B\rightarrow Y}[\cdot] = \Tr_E(V_{B\rightarrow YE} \cdot V_{B\rightarrow YE}^\dagger)$,
and $\rho_{AYE} = V_{B\rightarrow YE} \rho_{AB} V_{B\rightarrow YE}^\dagger$. Notice that $I(A:B)_{\rho_{AB}} = I(A:YE)_{\rho_{AYE}}$. It can be proven~\cite{hayashi2006quantum,luo2010decomposition} that the only states with vanishing discord are quantum-classical states of the form~\eqref{eq:QCstate}.

In the case of a (local) measurement, the recovery map (for our intentions, directly to $B$, rather than $YE$) can be assumed to be of the form~\cite{seshadreesan2014fidelity} $\cR_{Y\rightarrow B}[\cdot] = \sum_k \Tr(\proj{y}_Y \cdot ) \sigma^y_B$, 
with $\sigma^y_B$ states, so that the combination of measurement and recovery, $\cR_{Y\rightarrow B}\circ \cM_{B\rightarrow Y}$, is an entanglement-breaking map~\eqref{eq:EB}~\cite{horodecki2003entanglement}. Then, combining \eqref{eq:FawziRenner} and \eqref{eq:discord}, one has~\cite{seshadreesan2014fidelity}
\beq
\label{eq:wildediscord}
\sup_{\Lambda^{\text{EB}}\in\cL^{\text{EB}}}F(\Lambda^{\text{EB}}_B[\rho_{AB}],\rho_{AB}) \geq 2^{-\frac{1}{2} D(A:\underline{B})}.
\eeq
Introducing the surprisal of measurement recoverability~\cite{seshadreesan2014fidelity}   
$D_F(A:\underline{B}):=-\log\sup_{\Lambda^{\text{EB}}\in\cL^{\text{EB}}}F^2(\Lambda^{\text{EB}}_B[\rho_{AB}],\rho_{AB})$, one can cast~\eqref{eq:wildediscord} as $D_F(A:\underline{B}) \leq D(A:\underline{B})$. The surprisal of measurement recoverability quantifies the necessary disturbance introduced by manipulating locally (on $B$) the state $\rho_{AB}$, through measurement and preparation. 
Notice that this can be generalized to any class of maps that correspond to a non-trivial (local) manipulation~(see~\cite{piani2014quantumness}), i.e., one can consider
$D_{F,\cL}(A:\underline{B}):=-\log\sup_{\Lambda\in\cL}F^2(\Lambda_B[\rho_{AB}],\rho_{AB})$, for $\cL$ some class of channels.
With this notation, we can write $D_F(A:\underline{B})=D_{F,\cL^{\text{EB}}}(A:\underline{B})$, where, we recall, $\cL^{\text{EB}}$ indicates the set of entanglement-breaking channels.
Notice that if $\cL^{\textrm{EB}}\subseteq \cL$, it necessarily holds
\beq
D_{F,\cL}(A:\underline{B}) \leq  D_{F,\cL^{\text{EB}}}(A:\underline{B}) \leq D(A:\underline{B}).
\eeq
In particular, we will consider $\cL=\cL^{\textrm{Sym}_+(k)}$. Notice that, in the other direction, Eq.~\eqref{eq:chiribella} implies~(see Appendix)
$\sup_{\Lambda^{\text{EB}}}
F(\rho_{AB},\Lambda^{\text{EB}}_B[\rho_{AB}])
\geq 
\sup_{\Lambda^{\textrm{Sym}_+(k)}}
F(\rho_{AB},\Lambda^{\textrm{Sym}_+(k)}_B[\rho_{AB}]) - \sqrt{(2|B|)/k}$,
so $D_{F,\cL^{\textrm{Sym}_+(k)}}(A:\underline{B})\rightarrow D_{F,\cL^{\text{EB}}}(A:\underline{B})$ for $k\rightarrow \infty$.

\emph{Choi-Jamio{\l}kowski isomorphism and $k$-extendible maps.}---The Choi-Jamio{\l}kowski isomorphism~\cite{choi,jamiolkowski} is a one-to-one correspondence between linear maps $\Lambda_{X\rightarrow Y} $ from $L(\cH_X)$ to $L(\cH_Y)$ and linear operators $W_{XY}$ in $L(\cH_X\otimes\cH_Y)$. It reads
\beq
\label{eq:choi}
J(\Lambda)_{XY} =(\idmap_{X}\otimes\Lambda_{X'\rightarrow Y})[\tilde{\psi}^+_{XX'}],
\eeq
with inverse
\beq
\label{eq:choiinverse}
(J^{-1}(W_{XY}))_{X\rightarrow Y} [\rho_{X}] =\Tr_{X}(W^{\Gamma_{X}}_{XY}\rho_{X}).
\eeq
Here $\tilde{\psi}^+_{XX'}=\proj{\tilde{\psi}^+}_{XX'}$, with the unnormalized maximally entangled state $\ket{\tilde{\psi}^+}_{XX'}=\sum_x \ket{x}_{X}\ket{x}_{X'}$, for $\{\ket{x}\}$ an orthonormal basis of $\cH_X$, and $^{\Gamma_{X}}$ indicates partial transposition on $X$.
The operator $J(\Lambda)$ encodes all the information about the map $\Lambda$. In particular, the linear map $(J^{-1}(W_{XY}))_{X\rightarrow Y}$ defined via~\eqref{eq:choiinverse} is a valid quantum channel from $X$ to $Y$ if and only if $W_{XY}$ is positive semidefinite and $W_X=\Tr_{Y} (W_{XY})=\openone_{X}$. Also, $(J^{-1}(W_{XY}))_{X\rightarrow Y}$ is an entanglement breaking channel if and only if $W_{XY}$ satisfies the additional condition of being proportional to a separable state. Finally, it is easily checked that $J^{-1}(W_{XY})_{X\rightarrow Y}$ is  a $k$-Bose-symmetric extendible channel if and only if, besides satisfying the conditions to be isomorphic to a channel, $W_{XY}$ admits $k$-Bose-symmetric extensions on $Y$.


\emph{A faithful SDP lower bound to quantum discord}---The major obstacle in the computation of the surprisal of measurement recoverability is the fact that it requires an optimization over entanglement breaking channels, i.e., via the Choi-Jamio{\l}kowski isomorphism, over separable states, which cannot be easily parametrized. 

In our case, relaxing the problem, we choose to maximize the fidelity between $\rho=\rho_{AB}$ and $\sigma=(\idmap_A\otimes\Lambda_{B}^{\textrm{Sym}_+(k)})[\rho_{AB}])$, optimizing over $\Lambda_{B}^{\textrm{Sym}_+(k)}\in\cL^{\textrm{Sym}_+(k)}$.
The Choi-Jamio{\l}koski isomorphism allows us to write this as an optimization over positive semidefinite operators $W_{BB'}$ that satisfy $W_B=\openone_B$ and admit $k$-Bose-symmetric extensions. Hence we can write this as an optimization over extended operators $W_{BB^k}$ isomorphic to $k$-Bose-symmetric broadcasting channels. Putting everything together, we find that $\sup_{\Lambda\in\cL^{\textrm{Sym(k)}}}F(\rho_{AB},\Lambda_B[\rho_{AB}])$, from which $D_{F,\cL^{\textrm{Sym}_+(k)}}(A:\underline{B})$ can be derived, corresponds to the solution of the following SDP optimization problem:
\begin{subequations}
  \label{eq:SDPbroadcastfidelity}
  \begin{align}
    {\text{maximize}}\quad
    & \frac{1}{2}(\Tr(X)+\Tr(X^\dagger)) \\
    \text{subject to}\quad
    &  \begin{pmatrix}
    	\rho_{AB} 		& X \\
     	X^\dagger 	& \Tr_{\backslash AB_1}(W^{\Gamma_B}_{BB^k}\rho_{AB})
     \end{pmatrix}
     \geq 0 \\
    & W_{BB^k} \geq 0 \\
    & W_{B} = \openone_B \\  
    & W_{BB^k} = \Pi^+_{B^k} W_{BB^k} \Pi^+_{B^k}. 
  \end{align}
\end{subequations}
We already argued that $D_{F,\cL^{\textrm{Sym}_+(k)}}(A:\underline{B})$ converges to $D_{F,\cL^{\textrm{EB}}}(A:\underline{B})$. To see that it does so monotonically, i.e., that $D_{F,\cL^{\textrm{Sym}(k+1)}}(A:\underline{B})\geq D_{F,\cL^{\textrm{Sym}_+(k)}}(A:\underline{B})$, it is enough to notice that, if $W_{BB^{k+1}}$ is Bose-symmetric on $B^{k+1}$, then $\Tr_{B_{k+1}}(W_{BB^{k+1}})$ is Bose-symmetric on $B^k$.  We also remark again that $D_{F,\cL^{\textrm{Sym}(2)}}(A:\underline{B})$ is already a faithful quantifier of discord, in the sense that, thanks to the no-local-broadcasting theorem, we know it is strictly positive for any state that is not classical on $B$. Finally, thanks to the properties of the fidelity $F$, in particular its
 monotonicity under quantum operations, i.e., $F(\Lambda[\sigma],\Lambda[\rho])\geq F(\sigma,\rho)$~\cite{nielsen2010quantum}, it is immediate to check that each $D_{F,\cL^{\textrm{Sym}_+(k)}}(A:\underline{B})$ is invariant under local unitaries on $B$, and monotonically decreasing under general local operations on $A$~\footnote{A detailed proof for the case of $D_{F,\cL^{\textrm{EB}}}$, which can be immediately adapted to $D_{F,\cL^{\textrm{Sym}_+(k)}}$, is presented in~\cite{seshadreesan2014fidelity}.}. Thus, each $D_{F,\cL^{\textrm{Sym}_+(k)}}$, in particular in the case $k=2$, constitutes in itself a well-behaved measure of the general quantumness of correlations~\cite{brodutch2012criteria,piani2012problem}.
 
 \begin{figure}
\begin{center}
\includegraphics[scale=0.35]{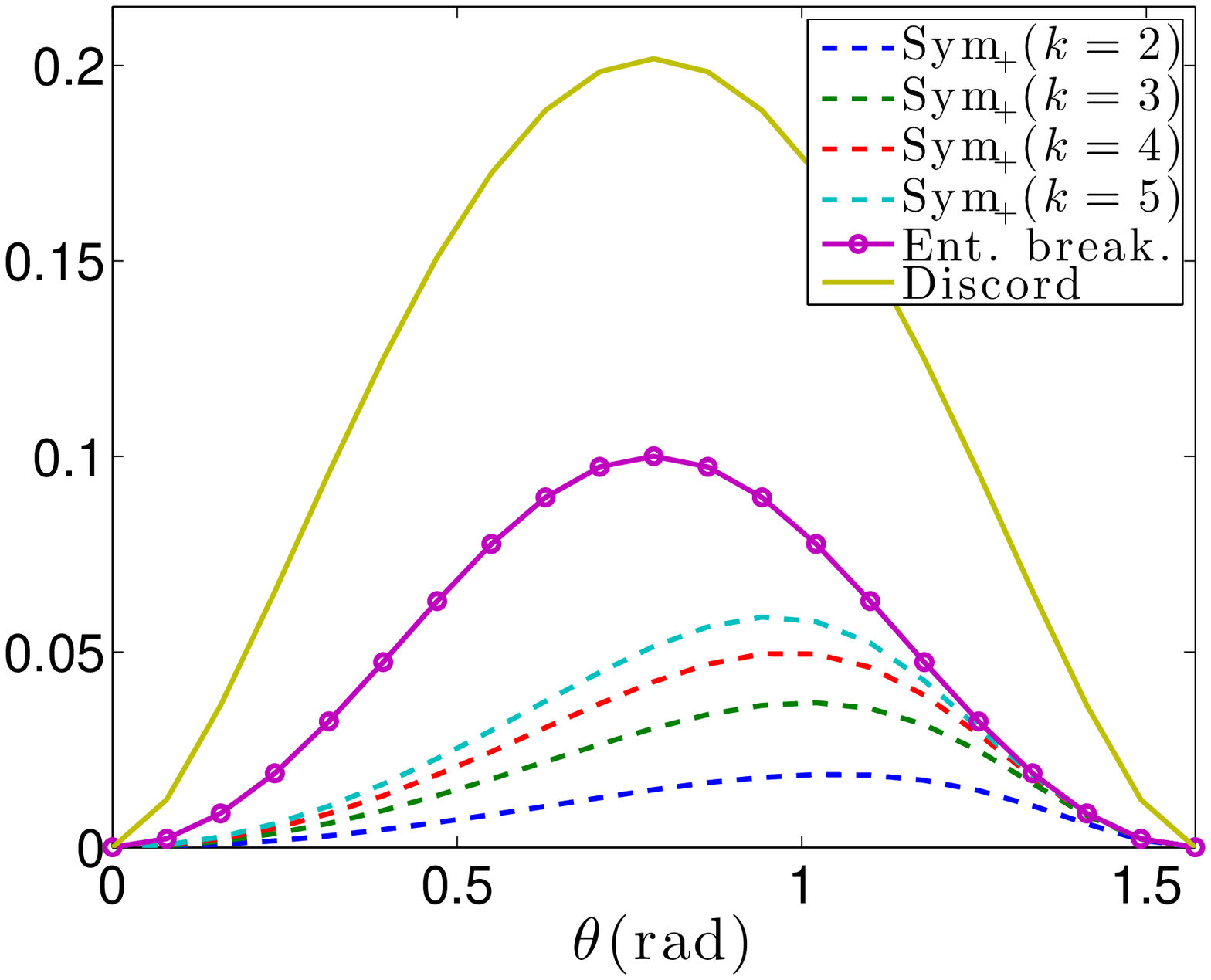}
 \end{center}
 \caption{A hierarchy of lower bounds to quantum discord (colour online). We consider the class of states $\rho_{AB}(\theta)=\frac{1}{2}\proj{0}_A\otimes\proj{\psi_0(\theta)}_B+\frac{1}{2}\proj{1}_A\otimes\proj{\psi_1(\theta)}_B$, with $\ket{\psi_a(\theta)}= \cos(\theta/2)\ket{0}+(-1)^a\sin(\theta/2)\ket{1}$, $a=0,1$, for $\theta\in[0,\pi/2]$.
 From bottom to top, we plot $D_{F,\cL^{\textrm{Sym}_+(k)}}$ for $k=2,3,4,5$ (dashed lines),
  $D_{F,\cL^{\textrm{EB}}}$ (line with circles), as calculated via SDP, 
  and the discord proper $D$ (on $B$) (solid line) as calculated analytically in~\cite{fuchstwo,yao2013quantum}. The state $\rho_{AB}(\theta)$ is classical on $B$ only for $\theta=0,\pi/2$, and any element in the hierarchy verifies this quantitatively. See the main text for definitions.}
 \label{fig:graph}
 \end{figure}

Notice that, if the goal is that of lower-bounding the surprisal of measurement recoverability---and in turn standard discord---rather than just considering a class of physical channels like Bose-symmetric extendible ones, we can impose additional `unphysical' properties that nonetheless make the considered class more closely approximate the class of entanglement-breaking channels. Correspondingly, the SDP optimization~\eqref{eq:SDPbroadcastfidelity} can be modified to include additional constraints, in particular asking for $W_{BB^k}$ to be positive under partial transposition (PPT) in any bipartite cut. In particular, simply by asking that it is PPT with respect to the $B:B^k$ partition, e.g., by adding to~\eqref{eq:SDPbroadcastfidelity} the condition $W_{BB^k}^{\Gamma_B}\geq 0$, we make the corresponding $k$-Bose-extendible channel PPT binding~\cite{PPbinding}, i.e., such that the state $(\idmap_A\otimes\Lambda_B)[\sigma_{AB}]$ is PPT for all $\sigma_{AB}$. This is a non-trivial constraint also for the case $k=1$, and, in the case $|B|=2$, enough to make the channel entanglement breaking~\cite{horodecki1996separability} so that in this case the solution to the SDP provides exactly the surprisal of measurement recoverability. We implemented \eqref{eq:SDPbroadcastfidelity} in MATLAB~\cite{MATLAB}, making use of CVX~\cite{cvx,gb08} and other tools publicly available~\cite{qetlab,cubittcode}. An example of the results is presented in Figure~\ref{fig:graph}.

\emph{Discord, entanglement, and symmetric extensions.}---Our approach, based on an SDP hierarchy dealing with symmetric extensions, is inspired by and very similar to the one used to verify entanglement~\cite{DohertyPRL,DohertyPRA} (see also~\cite{nowakowski} for applications to the extendability of channels). In turn, the fact that fidelity can be expressed as an SDP program, which we exploited here, could also be adopted for the study and quantification of entanglement, providing a hierarchy of SDP programs that allows to calculate the largest fidelity of the given state $\rho_{AB}$ with any state $\sigma^{\textrm{Sym(k)}}_{AB}$ admitting a (Bose-)$k$-symmetric extension on $B$, and converging to the fidelity of separability~\cite{streltsov2010linking}. Our approach points to a illuminating conceptual relation between entanglement and discord, in terms of symmetric extensions and how they are generated: entanglement limits how well a state can be approximated by a state admitting a $k$-symmetric  extension, and only separable states can be perfectly approximated for all $k\geq 2$; on the other hand, discord limits how well a state can be locally transformed into a (Bose-)$k$-symmetric extension of itself, with only discord-free states that can be perfectly locally broadcast, for any $k\geq 2$. Remarkably, while entanglement can be exactly characterized only in the limit $k\rightarrow \infty$, discord can be pinned down already by considering the case $k=2$---this is the content of the no-local-broadcasting theorem. This explains why, while entanglement verification is hard~\cite{gurvits2003classical,ioannou2007computational,gharibian2010}, our hierarchy provides a faithful, reliable, and efficiently computable lower bound to discord already at the lowest level. 

\emph{Conclusions.}---We have introduced a hierarchy of discord-like quantifiers. They are defined in terms of how well a given quantum state $\rho_{AB}$ can be locally broadcast. More precisely, in the lowest non-trivial level of the hierarchy, our quantifier answers the following question: Consider any mapping from $B$ to the  symmetric subspace of two copies $B_1B_2$ of $B$; how well can the resulting $\rho_{AB_1}$ (equivalently, $\rho_{AB_2}$) approximate the original $\rho_{AB}$? In the limit where we consider infinite copies of $B$, instead of just two, the question becomes that of how well the information about $B$ contained in $\rho_{AB}$ can be transmitted (equivalently, stored) in the form of classical information, through a measure, transmit (store), and re-prepare process. Our hierarchy is faithful at all non-trivial levels, i.e., the quantifiers are non-vanishing for states that are not classical on $B$. Each element in the hierarchy corresponds to an SDP optimization problem; hence, it can be reliably and efficiently (in the dimensions of the systems) computed numerically~\cite{DohertyPRL,DohertyPRA}. Furthermore, while each element has a clear physical meaning in itself and satisfies the basic properties to be expected for a meaningful quantifier of the quantumness, it also constitutes a lower bound to the standard quantum discord. Remarkably, in the case in which we are interested in the discord features of a qubit-qudit system, with measurement on the qudit, a tailored SDP program can provide exactly, i.e., up to numerical error, the surprisal of measurement recoverability defined by Seshadreesan and Wilde~\cite{seshadreesan2014fidelity}, and thus  the best possible lower bound to standard quantum discord based on the breakthrough result about quantum Markov chains of Fawzi and Renner~\cite{fawzi2014quantum}. 

\emph{Acknowledgements}---I acknowledge support from NSERC. I would like to thank K. P. Seshadreesan and M. M. Wilde for discussions, and G. Adesso for useful correspondence.


\appendix

\begin{lemma}
\label{lem:quasitriangle}
Consider any three mixed states $\rho$, $\sigma$, and $\tau$. It holds,
\beq
\begin{aligned}
|F(\rho,\sigma)-F(\tau,\sigma)|&\leq \sqrt{2}\sqrt{1-F(\tau,\rho)}\\
				&\leq \sqrt{2}\sqrt{\Delta(\tau,\rho)}
\end{aligned}
\eeq
\end{lemma}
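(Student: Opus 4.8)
The plan is to prove the first inequality, $|F(\rho,\sigma)-F(\tau,\sigma)|\leq\sqrt{2}\sqrt{1-F(\tau,\rho)}$, since the second then follows immediately from the standard bound $1-F(\tau,\rho)\leq\Delta(\tau,\rho)$ quoted in the Preliminaries. The natural tool is Uhlmann's theorem: fix purifications and use the fact that fidelity is the maximal pure-state overlap. Concretely, I would pick a purification $\ket{\sigma}$ of $\sigma$ on an enlarged space, and let $\ket{\rho}$, $\ket{\tau}$ be purifications of $\rho$ and $\tau$ chosen so that $F(\rho,\sigma)=|\braket{\rho}{\sigma}|$ and $F(\tau,\sigma)=|\braket{\tau}{\sigma}|$ — i.e.\ each purification is aligned with the \emph{same} fixed $\ket{\sigma}$ by Uhlmann. (One must check dimensions are large enough to host all three purifications on a common space; this is routine.)

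With these vectors in hand, the key estimate is
\[
\bigl|\,|\braket{\rho}{\sigma}|-|\braket{\tau}{\sigma}|\,\bigr|\leq |\braket{\rho}{\sigma}-\braket{\tau}{\sigma}|=|\braket{\rho-\tau}{\sigma}|\leq \|\ket{\rho}-\ket{\tau}\|,
\]
the last step by Cauchy--Schwarz with $\|\ket{\sigma}\|=1$. So it remains to bound $\|\ket{\rho}-\ket{\tau}\|$. Now $\|\ket{\rho}-\ket{\tau}\|^2=2-2\,\mathrm{Re}\,\braket{\rho}{\tau}\leq 2-2\,|\braket{\rho}{\tau}|\leq 2(1-F(\tau,\rho))$, where the last inequality uses that for \emph{any} pair of purifications of $\rho$ and $\tau$ one has $|\braket{\rho}{\tau}|\leq F(\tau,\rho)$ (Uhlmann gives equality for the optimal pair; here we only need the inequality, so no extra alignment assumption on $\ket{\rho},\ket{\tau}$ relative to each other is needed). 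Combining, $\bigl|F(\rho,\sigma)-F(\tau,\sigma)\bigr|\leq\|\ket{\rho}-\ket{\tau}\|\leq\sqrt{2}\sqrt{1-F(\tau,\rho)}$, which is the claim.

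The only subtlety — and the step I would be most careful about — is the simultaneous choice of purifications: Uhlmann's theorem optimizes $\ket{\rho}$ against a \emph{given} $\ket{\sigma}$ and, separately, $\ket{\tau}$ against the \emph{same} given $\ket{\sigma}$, so both overlaps $\braket{\rho}{\sigma}$ and $\braket{\tau}{\sigma}$ can be made simultaneously equal to the respective fidelities with one fixed $\ket{\sigma}$; then $\braket{\rho}{\tau}$ is whatever it happens to be, and we only use the one-sided bound $|\braket{\rho}{\tau}|\leq F(\tau,\rho)$ for it. Everything else is Cauchy--Schwarz and the elementary identity for the norm of a difference of unit vectors, so no real obstacle remains. (Equivalently, one can avoid purifications entirely by invoking the fact that $\rho\mapsto F(\rho,\sigma)$ is the square root of a concave function and bounding its "Lipschitz" behaviour via the Bures metric, but the Uhlmann argument above is the cleanest route.)
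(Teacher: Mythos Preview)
Your argument has a genuine gap in the step bounding $\|\ket{\rho}-\ket{\tau}\|$. You write
\[
\|\ket{\rho}-\ket{\tau}\|^2 = 2 - 2\,\mathrm{Re}\,\braket{\rho}{\tau} \leq 2 - 2\,|\braket{\rho}{\tau}| \leq 2\bigl(1 - F(\tau,\rho)\bigr),
\]
but both inequalities point the wrong way. Since $\mathrm{Re}\,z \leq |z|$, one has $2 - 2\,\mathrm{Re}\,z \geq 2 - 2|z|$, not $\leq$. And since Uhlmann's theorem says $|\braket{\rho}{\tau}| \leq F(\tau,\rho)$ for \emph{any} purifications (as you yourself note), one gets $2 - 2|\braket{\rho}{\tau}| \geq 2(1 - F(\tau,\rho))$, again the wrong direction. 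With your alignment choice---fixing $\ket{\sigma}$ and optimizing $\ket{\rho}$, $\ket{\tau}$ separately against it---there is no control on $\braket{\rho}{\tau}$ from below, and $\|\ket{\rho}-\ket{\tau}\|$ can genuinely exceed $\sqrt{2(1-F(\tau,\rho))}$.

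The fix is to anchor the purifications differently: fix an arbitrary purification $\ket{\psi_\rho}$ of $\rho$, and choose $\ket{\psi_\sigma}$ and $\ket{\psi_\tau}$ so that $\braket{\psi_\rho}{\psi_\sigma} = F(\rho,\sigma)$ and $\braket{\psi_\rho}{\psi_\tau} = F(\rho,\tau)$ (both real and nonnegative, by Uhlmann plus a phase choice). Then $\|\ket{\psi_\rho}-\ket{\psi_\tau}\|^2 = 2 - 2F(\rho,\tau)$ \emph{exactly}, and the Uhlmann bound $|\braket{\psi_\tau}{\psi_\sigma}| \leq F(\tau,\sigma)$ now enters with the correct sign:
\[
F(\rho,\sigma) = |\braket{\psi_\rho}{\psi_\sigma}| \leq |\braket{\psi_\tau}{\psi_\sigma}| + \|\ket{\psi_\rho}-\ket{\psi_\tau}\| \leq F(\tau,\sigma) + \sqrt{2}\sqrt{1-F(\tau,\rho)}.
\]
The other direction follows by swapping $\rho\leftrightarrow\tau$. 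This is exactly the route the paper takes; the essential point you missed is that the ``pivot'' purification must be one of the two states appearing under $F(\cdot,\cdot)$ on the right-hand side, not $\sigma$.
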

\begin{proof}
Fix an arbitrary purification $\ket{\psi_\rho}$, and choose purifications $\ket{\psi_\sigma}$ and $\ket{\psi_\sigma}$ such that $\braket{\psi_\rho}{\psi_\sigma}=F(\rho,\sigma)$ and $\braket{\psi_\rho}{\psi_\sigma}=F(\rho,\sigma)$.  This is always possible because of Uhlmann's theorem~\cite{uhlmann1976transition,nielsen2010quantum} and by choosing properly phases.

Then,
\begin{subequations}
  	\begin{align*}
		F(\rho,\sigma)	&\notag=|\braket{\psi_\rho}{\psi_\sigma}|\\
					&\notag=|\left(\left(\bra{\psi_\rho}-\bra{\psi_{\tau}}\right)+\bra{\psi_\tau}\right) \ket{\psi_\sigma}|\\
					&\leq |\braket{\psi_\tau}{\psi_\sigma}| + |\left(\bra{\psi_\rho}-\bra{\psi_{\tau}}\right) \ket{\psi_\sigma}|\\
					&\leq F(\tau,\sigma)+\sqrt{\left(\bra{\psi_\rho}-\bra{\psi_{\tau}}\right)\left(\ket{\psi_\rho}-\ket{\psi_{\tau}}\right)}\\
					&\notag=F(\tau,\sigma)+\sqrt{2-\braket{\psi_\tau}{\psi_\rho}-\braket{\psi_\rho}{\psi_\tau}}\\
					&\notag=F(\tau,\sigma)+\sqrt{2}\sqrt{1-F(\tau,\rho)}\\
					&\leq F(\tau,\sigma)+\sqrt{2}\sqrt{\Delta(\tau,\rho)}.
	\end{align*}
\end{subequations}
The first inequality is just the triangle inequality for the absolute value. The second inequality is due to the fact that the fidelity between two states is is the maximum overlap of any two purifications of the states~\cite{uhlmann1976transition,nielsen2010quantum}. The last inequality is due to the standard relation $1-F(\tau,\rho)\leq \Delta(\tau,\rho)$~\cite{fuchs1999cryptographic,nielsen2010quantum}.
\end{proof}

\begin{theorem}
It holds
\begin{multline*}
\sup_{\Lambda^{\text{EB}}} F(\rho_{AB},\Lambda^{\text{EB}}_B[\rho_{AB}])\\
\geq 
\sup_{\Lambda^{\textrm{Sym}_+(k)}}
F(\rho_{AB},\Lambda^{\textrm{Sym}_+(k)}_B[\rho_{AB}]) - \sqrt{\frac{2|B|}{k}}.
\end{multline*}
\end{theorem}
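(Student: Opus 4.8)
The plan is to chain together the diamond-norm approximation estimate~\eqref{eq:chiribella} and the ``quasi-triangle'' inequality of Lemma~\ref{lem:quasitriangle}. First I would fix an arbitrary $k$-Bose-symmetric extendible channel $\Lambda^{\textrm{Sym}_+(k)}_{B}$. By~\eqref{eq:chiribella} there is an entanglement-breaking map $\Lambda^{\text{EB}}_{B}$ whose diamond-norm distance from $\Lambda^{\textrm{Sym}_+(k)}_{B}$ is at most $|B|/k$; since the supremum in~\eqref{eq:chiribella} already ranges over all input states (hence over all reference systems $A$), this at once gives, for our fixed $\rho_{AB}$,
\[
\Delta\bigl(\Lambda^{\textrm{Sym}_+(k)}_{B}[\rho_{AB}],\,\Lambda^{\text{EB}}_{B}[\rho_{AB}]\bigr)\leq \frac{|B|}{k}.
\]

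Next I would invoke Lemma~\ref{lem:quasitriangle} with $\sigma=\rho_{AB}$, $\rho=\Lambda^{\textrm{Sym}_+(k)}_{B}[\rho_{AB}]$ and $\tau=\Lambda^{\text{EB}}_{B}[\rho_{AB}]$, using the symmetry of $F$ and of $\Delta$, to obtain
\[
\bigl|F(\rho_{AB},\Lambda^{\textrm{Sym}_+(k)}_{B}[\rho_{AB}])-F(\rho_{AB},\Lambda^{\text{EB}}_{B}[\rho_{AB}])\bigr|\leq \sqrt{2}\,\sqrt{\frac{|B|}{k}}=\sqrt{\frac{2|B|}{k}}.
\]
In particular $F(\rho_{AB},\Lambda^{\text{EB}}_{B}[\rho_{AB}])\geq F(\rho_{AB},\Lambda^{\textrm{Sym}_+(k)}_{B}[\rho_{AB}])-\sqrt{2|B|/k}$. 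Since $\Lambda^{\text{EB}}_{B}$ is one particular entanglement-breaking channel, its fidelity is bounded above by $\sup_{\Lambda^{\text{EB}}}F(\rho_{AB},\Lambda^{\text{EB}}_{B}[\rho_{AB}])$, a quantity that does not depend on the choice of $\Lambda^{\textrm{Sym}_+(k)}_{B}$. I would then finish by taking the supremum over all $k$-Bose-symmetric extendible channels on the right-hand side, which yields precisely the claimed inequality.

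The argument is really just a two-step chaining of already established facts, so I do not expect a serious obstacle. The only points that deserve a line of care are: (i) that~\eqref{eq:chiribella}, phrased as a supremum over inputs, controls the trace distance on the single input $\rho_{AB}$ of interest --- true because that supremum is nothing but the diamond-norm distance between the two channels on $B$; and (ii) the final interchange of suprema, which is legitimate because the bound ``$\sup_{\Lambda^{\text{EB}}}F(\rho_{AB},\Lambda^{\text{EB}}_{B}[\rho_{AB}]) \geq F(\rho_{AB},\Lambda^{\textrm{Sym}_+(k)}_{B}[\rho_{AB}])-\sqrt{2|B|/k}$'' holds for every $\Lambda^{\textrm{Sym}_+(k)}_{B}$ with a left-hand side independent of it, so one may pass to the supremum over $\Lambda^{\textrm{Sym}_+(k)}_{B}$ on the right while preserving the inequality.
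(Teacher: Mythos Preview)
Your proposal is correct and follows essentially the same route as the paper's own proof: apply~\eqref{eq:chiribella} to obtain an entanglement-breaking channel close in trace distance, feed this into Lemma~\ref{lem:quasitriangle}, and then pass to suprema. Your explicit mention of the symmetry of $F$ and $\Delta$ when assigning roles in Lemma~\ref{lem:quasitriangle}, and your care about the order of the two suprema, are minor expository refinements but not substantive differences.
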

\begin{proof}
Eq. \eqref{eq:chiribella} implies that, for any $\Lambda^{\textrm{Sym}_+(k)}$, there is $\Lambda^{\text{EB}}$ such that, for any $\rho_{AB}$
\[
\Delta\left(\Lambda^{\textrm{Sym}_+(k)}_B[\rho_{AB}],\Lambda^{\text{EB}}_B[\rho_{AB}]\right)\leq \frac{|B|}{k}.
\]
Thus, using Lemma~\ref{lem:quasitriangle}, we obtain
\begin{multline*}
F(\rho_{AB},\Lambda^{\text{EB}}_B[\rho_{AB}])\\
\begin{aligned}
&\geq F(\rho_{AB},\Lambda^{\textrm{Sym}_+(k)}_B[\rho_{AB}])\\
&\quad - \sqrt{2}\sqrt{\Delta\left(\Lambda^{\textrm{Sym}_+(k)}_B[\rho_{AB}],\Lambda^{\text{EB}}_B[\rho_{AB}]\right)}\\
&\geq F(\rho_{AB},\Lambda^{\textrm{Sym}_+(k)}_B[\rho_{AB}]) - \sqrt{\frac{2|B|}{k}}.
\end{aligned}
\end{multline*}
Since this is valid for any $\Lambda^{\textrm{Sym}_+(k)}$, we can take the supremum on both sides over channels in the respective classes.
\end{proof}

\end{document}